\newcommand{\keywords}[1]{\par\addvspace\baselineskip
\noindent\keywordname\enspace\ignorespaces#1}
\begin{document}

\mainmatter  

\title{Differential privacy for counting queries: can Bayes estimation help uncover the true value?}

\titlerunning{Bayes estimation in counting queries}

%
%
\author{Maurizio Naldi%
\and Giuseppe D'Acquisto}
\authorrunning{M. Naldi - G. D'Acquisto}

\institute{Universit\`{a} di Roma Tor Vergata\\Department of Computer Science and Civil Engineering\\
Via del Politecnico 1, 00133 Roma, Italy\\
\mailsa\\
\mailsb\\
}

%
%

\toctitle{Lecture Notes in Computer Science}
\tocauthor{Authors' Instructions}
\maketitle

\begin{abstract}
Differential privacy is achieved by the introduction of Laplacian noise in the response to a query, establishing a precise trade-off between the level of differential privacy and the accuracy of the database response (via the amount of noise introduced). Multiple queries may improve the accuracy but erode the privacy budget. We examine the case where we submit just a single counting query. We show that even in that case a Bayesian approach may be used to improve the accuracy for the same amount of noise injected, if we know the size of the database and the probability of a positive response to the query.
\keywords{Statistical databases; Differential privacy; Output perturbation; Bayes estimation}
\end{abstract}

\section{Introduction}
Privacy concerns are an ever-growing issue in a world where more and more information is created, circulated, and shared. In addition to the personal data that users share voluntarily, many apps grab users' personal data with their formal consent, though the users themselves are not fully aware of the implications \cite{wang2011} \cite{zhou2012}. 

Threats to privacy are not limited to databases where queries about individuals are possible, but extend to statistical databases, where queries are allowed just for aggregate data \cite{shoshani1982}. Several approaches have been proposed to preserve privacy in that context \cite{adam1989} \cite{agrawal2000}. The notion of differential privacy has emerged, which at the same provides a definition of quantifiable privacy and a way to guarantee it (see the introducing paper by Dwork \cite{dwork2006differential}, the subsequent survey in \cite{dwork2008}, and its re-examination in 2011 \cite{dwork2011}). In differential privacy, the level $\epsilon$ of privacy guaranteed to an individual is measured through the extent to which its inclusion in a database changes the database response to a query.

It has been shown that the properties of differential privacy may be achieved by using a specialization of the output perturbation technique already considered in early works on the subject \cite{adam1989}: the addition of Laplacian noise to the database true response. However, the disclosure of aggregate data under the differential privacy scheme is not immune from problems. For example, through a synthetic dataset McClure and Reiter have questioned the use of the differential privacy level as a measure of the statistical disclosure risk, showing that the probability of an intruder uncovering true values may be significant even for high levels of differential privacy \cite{mcclure2012}. On the other hand, allowing multiple queries, which worsens the level of differential privacy and erodes the privacy budget, may require the addition of large quantities of noise \cite{sarathy2011} \cite{Heffetz2014}. In the presence of multiple queries, optimization of the noisy response has been sought after to preserve differential privacy while maintaining an adequate accuracy for the database response to be useful \cite{Li2010}. 

A central issue in differential privacy is therefore the trade-off between privacy and accuracy, with the use of multiple queries as a tool to achieve better accuracy but degrading the level of privacy. No attention has been specifically devoted instead to the case of a single query. Since established statistical estimation approaches (based, e.g., on the maximum likelihood principle, moment matching, or order statistics) require a sample, i.e., the output of multiple queries, it looks like the single noisy response from the database cannot be improved upon, with accuracy and privacy being tied in an immutable relationship.

In this paper, we consider instead the case of a single noisy response to a counting query and examine whether its accuracy can be improved.  We adopt a Bayesian approach, where the noisy response is corrected with the aid of two additional information: the size of the database and the expected fraction of positive responses (which does not constrain the actual response to the query). The Bayesian approach has been proposed in the past to reconstruct the distribution of the data after multiple queries \cite{agrawal2000}, while we focus on a single query and aim at the true value of the database response rather than the whole distribution.

We show that, for the same quantity of noise addition, the Bayesian approach achieves a lower average error than the simple noisy response and provides an estimate closer to the true value response in most cases. While this is achieved under the differential privacy requirements, the accuracy improvement represents a clear sign that the overall level of privacy may be vulnerable even for a single query under noise addition. We also provide an expression for the probability that the addition of Laplacian noise in counting queries leads to patently wrong responses, i.e., outside the valid range represented by the size of the database.

The paper is organized as follows. We describe the noise addition paradigm and its limitations in Section \ref{distortion}. We introduce Bayes estimation in Section \ref{estimation} and then provide both the evaluation framework and the simulation-based results in Section \ref{evaluation}.

\section{Value distortion}
\label{distortion}
\textit{Value distortion} is one of the methods in the general class of \textit{output perturbation} techniques to preserve privacy when responding to queries. Alternative methods are \textit{value-class membership} and \textit{input perturbation} \cite{agrawal2000} \cite{dwork2011}. We are interested in the use of this method for counting queries when differential privacy is aimed at. In this section, we first define counting queries, and then describe the characteristics of value distortion to achieve differential privacy.

We consider a database $D$ as a collection of $n$-tuples (the records) from some abstract domain $X$, so that $D\in X^{n}$ and $n=\vert D \vert$ (the size of the database). The set $X$ represents the set of all possible records. Here we do not pose any constraint on the nature of $X$: its elements could be strings of text, dates, Boolean or real values. We consider a property $S$, which each record may satisfy or not. Again, we do not pose any constraint on the type of property: we assume just that for any record the database owner can assess unambiguously if the record of interest satisfies the property or not. The result of a query concerning a record may therefore be expressed by a Boolean output. We can therefore set the following definition.
\begin{definition} A counting query $Q_{\psi}$ to a database $D$ concerning a predicate $\psi \quad :\quad X \rightarrow \{ 0,1\} $ is defined as the number of records in $D$ satisfying the predicate $\psi$\\
\begin{center}
 $Q_{\psi}(D) = \sum_{x\in D} \psi (x)$.
 \end{center} \end{definition}

A straightforward consequence of the definition is that $Q_{\psi}\in \{0,1,2,\ldots ,n\}$.

Under a value distortion approach, the database owner does not respond to a counting query by providing the true value $Q_{\psi}$, but adds some noise. The introduction of noise allows the database owner to provide a useful approximate response, while not disclosing the true value. We define therefore the noisy query
\begin{definition} A noisy query $Q_{\psi}^{*}$ to a database $D$ concerning a predicate $\psi \quad :\quad X \rightarrow \{ 0,1\} $ is defined as the sum of the number of records in $D$ satisfying the predicate $\psi$ and a random variate $R$\\
\begin{center}
 $Q_{\psi}^{*}(D) = R + \sum_{x\in D} \psi (x)$.
 \end{center} \end{definition}

If we let $R$ be a continuous random variable and do not pose any restriction on its value, we see that this alters the range of the noisy query with respect to the true value. We now have $Q_{\psi}^{*} \in \mathbb{R}$ and in particular we may end up with either $Q_{\psi}^{*}<0$ or $Q_{\psi}^{*} > n$. In those cases, the response delivered by the database owner is patently wrong.

The choice of the probability density function of the noise is critical. In \cite{agrawal2000} the uniform and the Gaussian distributions have been considered. Dwork has proven that for a mechanism that adds noise bounded by $E$ there exists an adversary that can reconstruct the database to within $4E$ positions \cite{dwork2011}. Probability distributions with a bounded domain should therefore be excluded from the set of possible choices. 

Over the past years the concept of differential privacy has emerged, with a tight relationship with the choice of distribution for value distortion. According to the general definition, a randomized function $\mathcal{K}$ gives $\epsilon$-differential privacy if, for all neighbouring databases $D_{1}$ and $D_{2}$ (i.e., differing just for one record), and all $U\subseteq \textit{Range}(\mathcal{K})$, we have $\mathbb{P}[\mathcal{K}(D_{1})\in U]\leq e^{\epsilon}\mathbb{P}[\mathcal{K}(D_{2})\in U]$ \cite{dwork2008}. As the privacy level $\epsilon$ gets closer to 0, the privacy guaranteed to an individual grows, since its inclusion in the database does not change appreciably the response of the database.  Dwork et alii have shown that adding Laplace-distributed noise to a counting query (the randomized mechanism $\mathcal{K}$) achieves $\epsilon$-differential privacy if the parameter of the Laplace distribution is $b=\Delta f/\epsilon$, with $\Delta f$ representing the maximum difference in the value of the response when exactly one input to the database is changed  \cite{dwork2006} \cite{dwork2011}. For a counting query we have $\Delta f=1$ \cite{dwork2011}  \cite{sarathy2011}, since the addition of a record may modify the response by 1 at most. As a consequence we have $b=1/\epsilon$, and the standard deviation of the Laplace distribution is $\sqrt{2}/\epsilon$. There is therefore a tight relationship between the amount of noise injected and the level of differential privacy: larger values of the standard deviation of noise lead to more differential privacy (i.e., lower values of $\epsilon$). 

The Laplacian density function of $R$ with parameter $b$ (and variance $2b^{2}$) or, equivalently, with $\epsilon$-differential privacy is then
\begin{equation}
 f_{R}(z)=\frac{1}{2b}\exp(-\vert z \vert/b) = \frac{\epsilon}{2}\exp (-\epsilon\vert z \vert).
 \end{equation} 

Under the differential privacy conditions stated above, it can be easily shown that the probability $P^{*}$ of out-of-range values for the noisy response when the true response is $a$ is a function of the true value itself
\begin{equation}
\label{lapout}
P^{*} = \mathbb{P}[(Q_{\psi}^{*}<0 \bigcup Q_{\psi}^{*}>n \quad \vert \quad Q_{\psi}=a ] = \frac{e^{-a/b}+e^{(a-n)/b}}{2}
\end{equation}
While the noisy response is always non-exact, the chance that an out-of-range value is returned may lead to a patently wrong response. It is therefore desirable that the probability $P^{*}$ be as low as possible. Actually, for the case of Laplacian noise we see that it may be quite large. We can formulate the following bounds
\begin{theorem} If Laplacian noise is added to achieve $\epsilon$-differential privacy, the probability $P^{*}$ of out-of-range values is always $P^{*}\le\frac{1+e^{-\epsilon n}}{2}$. The maximum value of $P^{*}$ is achieved when the true response is $a=0$ or $a=n$.\end{theorem}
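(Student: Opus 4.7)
The plan is to treat equation (\ref{lapout}) as a function of the true value $a$ on the continuous interval $[0,n]$, substitute the differential-privacy relation $b=1/\epsilon$, and show that this function attains its maximum at the endpoints of the interval by a standard convexity argument.

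First I would write
\begin{equation*}
P^{*}(a) = \tfrac{1}{2}\bigl(e^{-\epsilon a} + e^{\epsilon(a-n)}\bigr),
\end{equation*}
which is meaningful for every real $a\in[0,n]$ (and in particular for the integer values actually taken by $Q_{\psi}$). The function is a sum of two exponentials in $a$, one decreasing and one increasing, so each term is convex and hence $P^{*}$ is convex on $[0,n]$. A convex function on a compact interval attains its maximum at an endpoint, so it suffices to compare $P^{*}(0)$ and $P^{*}(n)$.

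Next I would compute $P^{*}(0)=\tfrac{1}{2}(1+e^{-\epsilon n})$ and $P^{*}(n)=\tfrac{1}{2}(e^{-\epsilon n}+1)$, which coincide. This immediately gives the claimed bound $P^{*}\le \tfrac{1}{2}(1+e^{-\epsilon n})$ and shows that equality is attained precisely at $a=0$ and $a=n$. To confirm that the interior critical point is not a maximum, I would differentiate once, obtaining $\tfrac{dP^{*}}{da}=\tfrac{\epsilon}{2}\bigl(-e^{-\epsilon a}+e^{\epsilon(a-n)}\bigr)$, whose unique zero is $a=n/2$; plugging back yields the interior value $e^{-\epsilon n/2}$, which is strictly smaller than $\tfrac{1}{2}(1+e^{-\epsilon n})$ by AM--GM (or simply because $\tfrac{d^{2}P^{*}}{da^{2}}>0$ throughout), so $a=n/2$ is in fact the global minimum rather than the maximum.

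There is no real obstacle here: the argument is a routine application of convexity, and the only minor subtlety is to note that even though $a$ is integer-valued in the original problem, enlarging the domain to the continuous interval $[0,n]$ can only increase the supremum, and the resulting bound is already attained at the integer endpoints $0$ and $n$, so the conclusion transfers without loss.
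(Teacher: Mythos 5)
Your proof is correct, and it takes a mildly but genuinely different route from the paper's. The paper stays entirely in the discrete domain: it evaluates $P^{*}$ at $a=0$ and $a=n$, then analyses the difference condition $P^{*}(a+1)>P^{*}(a)$ for integer $a$, reduces it to $a>\frac{n-1}{2}$, and concludes that the sequence is first decreasing and then increasing, so the maximum sits at the endpoints. You instead relax $a$ to the continuous interval $[0,n]$ and invoke convexity of the sum of two exponentials, which places the maximum at an endpoint without any case analysis; the endpoint values coincide at $\frac{1}{2}(1+e^{-\epsilon n})$, and your AM--GM check of the interior critical point $a=n/2$ (value $e^{-\epsilon n/2}$) is a nice confirmation, though strictly redundant once convexity is established. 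Your approach is slightly more general and arguably cleaner; the paper's discrete computation has the side benefit of locating the discrete minimizer at $a=\frac{n-1}{2}$, which the authors use immediately afterwards to state $\min P^{*}$. You correctly flag the only subtlety --- that enlarging the domain to $[0,n]$ can only increase the supremum and the bound is still attained at the integer endpoints --- so nothing is missing.
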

\begin{proof} We first recall that under $\epsilon$-differential privacy with Laplacian noise, the probability density function of the noise $R$ is  $f_{R}(z)=\frac{\epsilon}{2}\exp(-\epsilon \vert z \vert)$.  According to Equation (\ref{lapout}) the probability of out-of-range values is then $P^{*} = \frac{e^{-\epsilon a}+e^{\epsilon(a-n)}}{2}$. In order to prove the theorem, we first compute the value of $P^{*}$ at either end of the valid interval ($a=0$ or $a=n$) and then prove that the value of $P^{*}$ inside the valid interval is always lower.\\
By inserting either $a=0$ or $a=n$, we obtain immediately $P^{*}=\frac{1+e^{-\epsilon n}}{2}$. \\
We can now see for which values of the true response the probability $P^{*}$ is increasing. Since $a\in \mathbb{N}$, that condition can be expressed as $P^{*}(a+1)>P^{*}(a)$, i.e. 
\begin{equation}
\frac{e^{-\epsilon (a+1)}+e^{\epsilon (a+1) - \epsilon n}}{2} > \frac{e^{-\epsilon a}+e^{\epsilon a -\epsilon n}}{2}.
\end{equation}
After some algebraic manipulation, this condition can be expressed as
\begin{equation}
e^{2\epsilon a}>e^{\epsilon n - \epsilon} \rightarrow a>\frac{n-1}{2}.
\end{equation}

We see therefore that the function $P^{*}(a)$ is first decreasing and then increasing: the maximum value is achieved when either $a=0$ or $a=n$, and is
\begin{equation}
\max P^{*} = \frac{1+e^{-\epsilon n}}{2},
\end{equation}
which completes the proof.  \qed 
  \end{proof}
Even though limited to the extreme ends of the valid interval, the probability of getting a non-valid response may be pretty high, even larger than 50\%. 

On the other hand, the minimum value of the probability of out-of-range values is obtained when the true value is $a = (n-1)/2$, which is an integer just when the number of records in the database is odd. By replacing this value in Equation (\ref{lapout}), with $b=1/\epsilon$, after some algebraic manipulation we obtain
\begin{equation}
\min P^{*} \simeq \frac{e^{-\frac{\epsilon}{2}(n-1)}+e^{-\frac{\epsilon}{2}(n+1)}}{2}.
\end{equation}

\section{Estimation of the true value}
\label{estimation}
In Section \ref{distortion}, we have seen how privacy can be preserved by perturbing the true response to a counting query through noise addition. In particular, we have seen that the addition of Laplacian noise with parameter $b=\epsilon$ leads to $\epsilon$-differential privacy. However, the response provided may be far from the true one. We wish to get instead a response as close as possible to the true value. In this section, we consider a method, based on a Bayesian approach, to correct the noisy response and get a better estimate of the true value. We contrast it with what we call the naive estimator. We assume anyway that the estimate is based on a \textbf{single} noisy response. Unlike other approaches (as well as typical estimation methods), we do not consider estimates based on a sample of values, i.e., the repetition of queries to obtain several noisy responses and improve the estimate by averaging or other statistical procedures. In the following, we also assume that the added noise follows a Laplacian distribution, with parameter $b=1/\epsilon$ to achieve $\epsilon$-differential privacy.

\subsection{The naive estimator}
A simple line of action is to consider the noisy response as an estimate of the true value. We call this the \textit{naive estimator} and represent it by the symbol $\hat{Q}_{\textrm{naive}}$. Its quality depends on the amount of differential privacy that has been injected. Such a naive estimator is unbiased, since the  expected value of the added noise is zero. But its variance, which is also the variance of the resulting noisy response, is \cite{dodge2008}
\begin{equation}
\mathrm{Var}[\hat{Q}_{\textrm{naive}}] = \mathrm{Var}[R] = 2b^{2} = \frac{2}{\epsilon ^{2}}.
\end{equation}

\subsection{Bayes estimator}
\label{bayesdef}
The generation of a noisy response, as described in Section \ref{distortion} starts with the observation of the true response, but the latter is the result of the query on a database whose entries, though known to the database owner, are actually generated by a random process, since the characteristics of the database population are not perfectly predictable in advance. The true response can therefore be considered as a random variable, so that the distribution of the noisy response is a biased Laplacian conditioned on the true value of the response
\begin{equation}
f_{Q_{\psi}^{*}}(y \vert Q_{\psi}=a) = \frac{\epsilon}{2}e^{-\epsilon\vert y-a \vert}.
\end{equation}
The revelation of the noisy response provides further information on the actual true value, though masked by the addition of noise.

In this context, we can apply Bayes' theorem, where the marginal distribution of the true value plays the role of prior distribution, and we are interested in obtaining the posterior distribution of the true value updated after the noisy response
\begin{equation}
\mathbb{P}[Q_{\psi}=k \vert Q_{\psi}^{*}=y] = \frac{\mathbb{P}[Q_{\psi}=k]f_{Q_{\psi}^{*}}(y \vert  Q_{\psi}=k)}{\sum_{i=0}^{n}\mathbb{P}[Q_{\psi}=i]f_{Q_{\psi}^{*}}(y \vert Q_{\psi}=i)}
\end{equation}

The resulting estimate for the true value after a \textbf{single} observation of the noisy response can therefore be formulated as
\begin{equation}
\label{bayes-1}
\mathbb{E}[Q_{\psi}\vert Q_{\psi}^{*}=y] = \frac{\sum_{k=0}^{n}k\mathbb{P}[Q_{\psi}=k]f_{Q_{\psi}^{*}}(y \vert Q_{\psi}=k)}{\sum_{i=0}^{n}\mathbb{P}[Q_{\psi}=i]f_{Q_{\psi}^{*}}(y \vert Q_{\psi}=i)}
\end{equation}

In Equation (\ref{bayes-1}), we know the distribution of the observed noisy response conditioned on the true value (a shifted Laplacian), and we assume to know the prior distribution of the true value. A reasonable assumption is to assume that the probability that a generic record satisfies the property $S$ for which we submit a query is a known value $p$, and that all the records are statistically independent of one another. This assumption results in a binomial distribution for the number of records satisfying the property $S$, as already assumed in \cite{mcclure2012}. The final form of the estimator is then 
\begin{equation}
\label{bayes-2}
\begin{split}
\hat{Q}_{\textrm{Bayes}}=\mathbb{E}[Q_{\psi}\vert Q_{\psi}^{*}=y] &= \frac{\sum_{k=0}^{n}k{n \choose k}p^{k}(1-p)^{n-k}\frac{1}{2}\epsilon e^{-\epsilon\vert y-k\vert}}{\sum_{i=0}^{n}{n \choose i}p^{i}(1-p)^{n-i}\frac{1}{2}\epsilon e^{-\epsilon\vert y-i\vert}}\\
&=\frac{\sum_{k=0}^{n}k{n \choose k}p^{k}(1-p)^{n-k} e^{-\epsilon\vert y-k\vert}}{\sum_{i=0}^{n}{n \choose i}p^{i}(1-p)^{n-i} e^{-\epsilon\vert y-i\vert}}.
\end{split}
\end{equation}
We recall that the additional information that the Bayesian estimator requires with respect to the naive estimator is the size $n$ of the database (such a possibility is explicitly mentioned as typical in the census case in \cite{dwork2011} and in the mean salary case in \cite{Heffetz2014}) and the probability that a generic records satisfies the property considered (i.e., the expected fraction of records satisfying that property). As to the latter requirement, it is to be noted that knowing that probability is not very significant to predict the actual value of the query, since the uncertainty related to the binomial generation mechanism is much larger than that due to noise addition. For example, in a database of 10,000 records, where the probability of the property $S$ is 0.3, the $1\sigma$ confidence interval for the query response has a width of $2\sqrt{np(1-p)}=91.6$, while the width of the corresponding interval for the Laplace noise with a $0.1$-differential privacy is a much lower $2\sqrt{2}/\epsilon = 28.2$.

\section{Which estimator is better?}
\label{evaluation}
In Section \ref{estimation}, we have considered two estimators of the true value sought after by the user: the naive estimator and the Bayes estimator. The naive estimator is the noisy response provided by the database owner, which is calibrated so as to guarantee $\epsilon$-differential privacy. The Bayes estimator corrects the noisy response through the only knowledge of the database size and the expected fraction of records satisfying the property the query refers to. If the Bayes estimator achieves a better accuracy, the privacy vs. accuracy trade-off imposed by noise addition is modified to get a better accuracy. In this section, we compare the two estimators to investigate that possibility. We first define two metrics that allow to compute the error suffered by the two estimators, and then provide the error performance as evaluated through simulation.

\subsection{Error metrics}
\label{metdef}
We consider two metrics to quantify the respective accuracy of the two estimators:
\begin{itemize}
\item Average error;
\item Probability of lower error;
\end{itemize}

The first metric is defined separately for the two estimators as follows:
\begin{equation}
\begin{split}
\overline{M}_{\textrm{naive}}&=\mathbb{E}\vert Q_{\psi} - \hat{Q}_{\textrm{naive}} \vert,\\
\overline{M}_{\textrm{Bayes}}&=\mathbb{E}\vert Q_{\psi} - \hat{Q}_{\textrm{Bayes}} \vert,
\end{split}
\end{equation}
where the expectation is taken over the range of possible values of the true response. The better estimator is of course that having the lower average error.

The second metric does not consider the exact differences in value between the error of the two estimator, but considers just the better in a generic outcome of the query:
\begin{equation}
\mathbb{P}[M_{\textrm{Bayes}}<M_{\textrm{naive}}].
\end{equation}
Since it is defined with reference to the Bayes estimator (see the sign of the above inequality), the latter is a better estimator if that probability is larger than 0.5.

It is possible that an estimator, though being better than the other most of the times, is so by a tiny amount while being extremely worse for a small fraction of times. The two metrics are therefore complementary.

\subsection{Simulation-based comparison}
In this section, we provide a comparison of the two estimators using the metrics defined in Section \ref{metdef}. All the results provided in this section have been obtained through MonteCarlo simulation with 100,000 simulation runs for each case, with the exception of the average error for the naive estimator, which can be easily derived analytically. Two different values have been considered for the database population: 100 and 1000. The probability that the response to the query is positive for a generic record has been set equal to 0.3, so that the expected result of the query is respectively 30 and 300. As to the level of differential privacy, Heffetz and Ligett have pointed out in \cite{Heffetz2014} that its setting (being a normative issue) is still an open issue. In the same paper, they consider two values: 0.1 and 1. Here we have considered a range encompassing those two values.

Let's consider first the average error. As hinted above, for the naive estimator the average error is easily computed as
\begin{equation}
\label{avenoise}
\begin{split}
\overline{M}_{\textrm{naive}}&=\mathbb{E}\vert Q_{\psi} - \hat{Q}_{\textrm{naive}} \vert = \mathbb{E}[\vert R \vert]\\ 
&= \int_{-\infty}^{+\infty}\frac{\vert z \vert \epsilon}{2}e^{-\epsilon\vert z \vert}dz = \int_{0}^{+\infty}z\epsilon e^{-\epsilon z}dz = \frac{1}{\epsilon}.
\end{split} 
\end{equation}

The performance of the two estimators depend on the amount of noise and the characteristics of the database population (its size $n$ and its expected response to the query $p$). In \figurename~\ref{fig:Err-1} and \figurename~\ref{fig:Err-2}, we examine the impact of the first two quantities, when $p=0.3$.
\begin{figure}[htbp]
\begin{center}
  \includegraphics[width=.65\columnwidth]{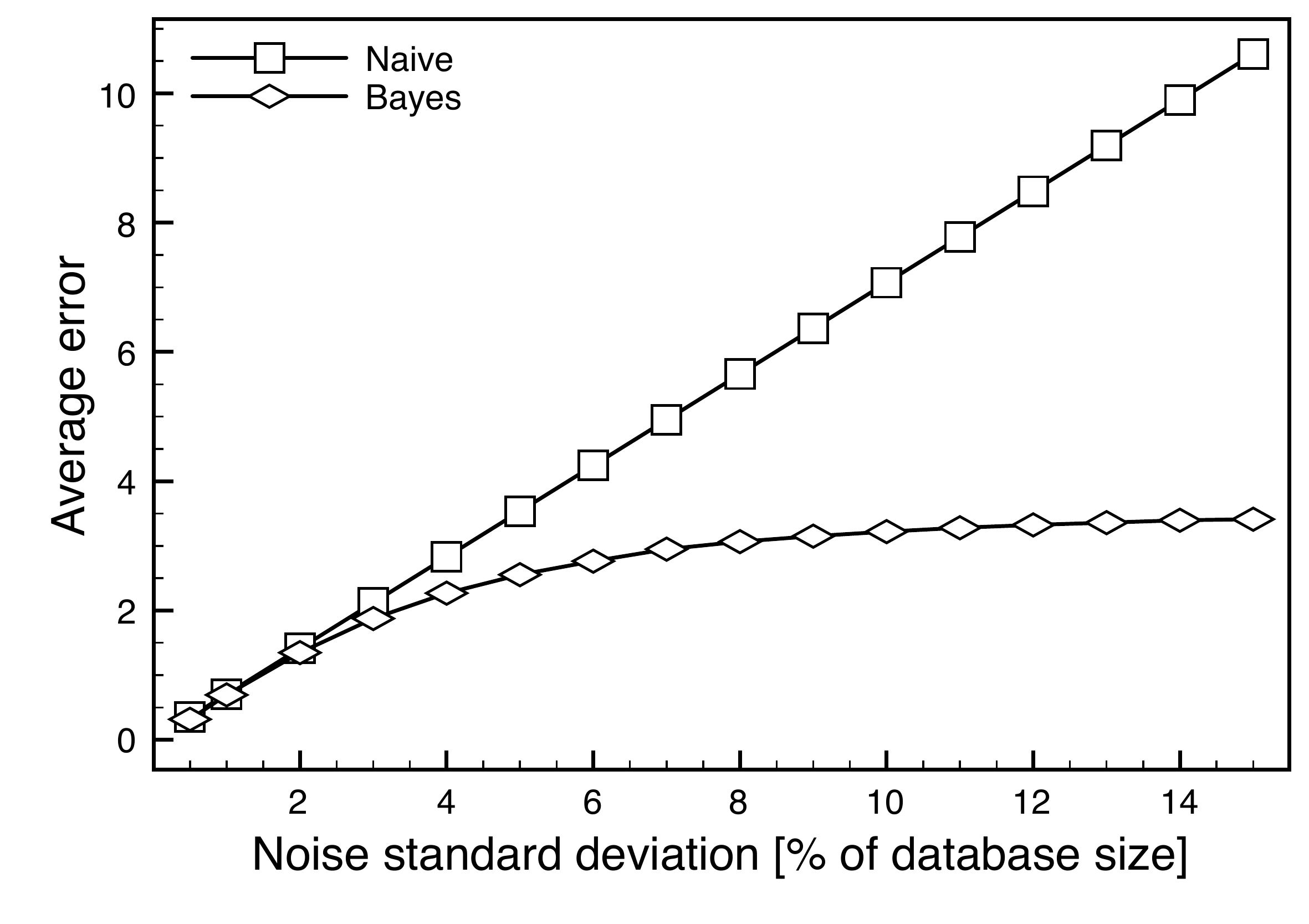}
    \caption{Impact of noise on the average error ($n=100$)}
    \label{fig:Err-1}
\end{center}
\end{figure}
\begin{figure}[htbp]
\begin{center}
  \includegraphics[width=.65\columnwidth]{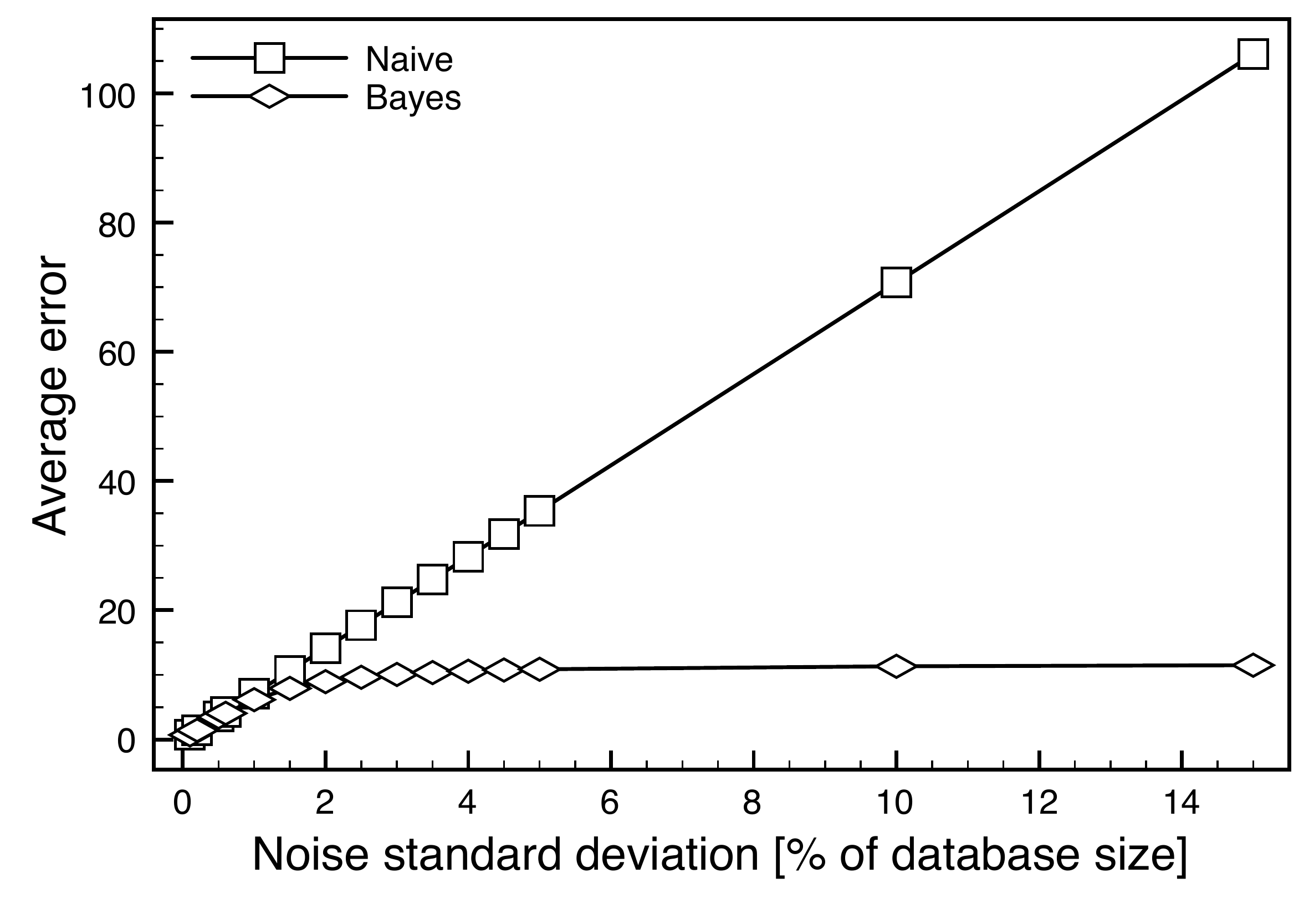}
    \caption{Impact of noise on the average error ($n=1000$)}
    \label{fig:Err-2}
\end{center}
\end{figure}

When the amount of noise injected is small, the performances of the two estimators are very close. But the error of the naive estimator grows linearly with the standard deviation of the noise. Instead, the error of the Bayes estimator levels off, though continuing a very slow growth. For all practical purposes the average error of the Bayes estimator is bounded and always lower than that attained by the naive estimator.

In \figurename~\ref{fig:Err-3} and \figurename~\ref{fig:Err-4}, we plot the average error as a function of the level $\epsilon$ of differential privacy (with the $x$-axis reversed, so that  traversing the graph from left to right brings us from lesser privacy to greater privacy); it is just a recalibration of the $x$-axis in terms of privacy rather than noise amount, since the standard deviation of noise is $\sqrt{2}/\epsilon$. We can formulate the same conclusions as before, but expressed in term of the level of privacy: the better performance of the Bayes estimator stand up in particular as a higher level of privacy is desired. 
\begin{figure}[htbp]
\begin{center}
  \includegraphics[width=.65\columnwidth]{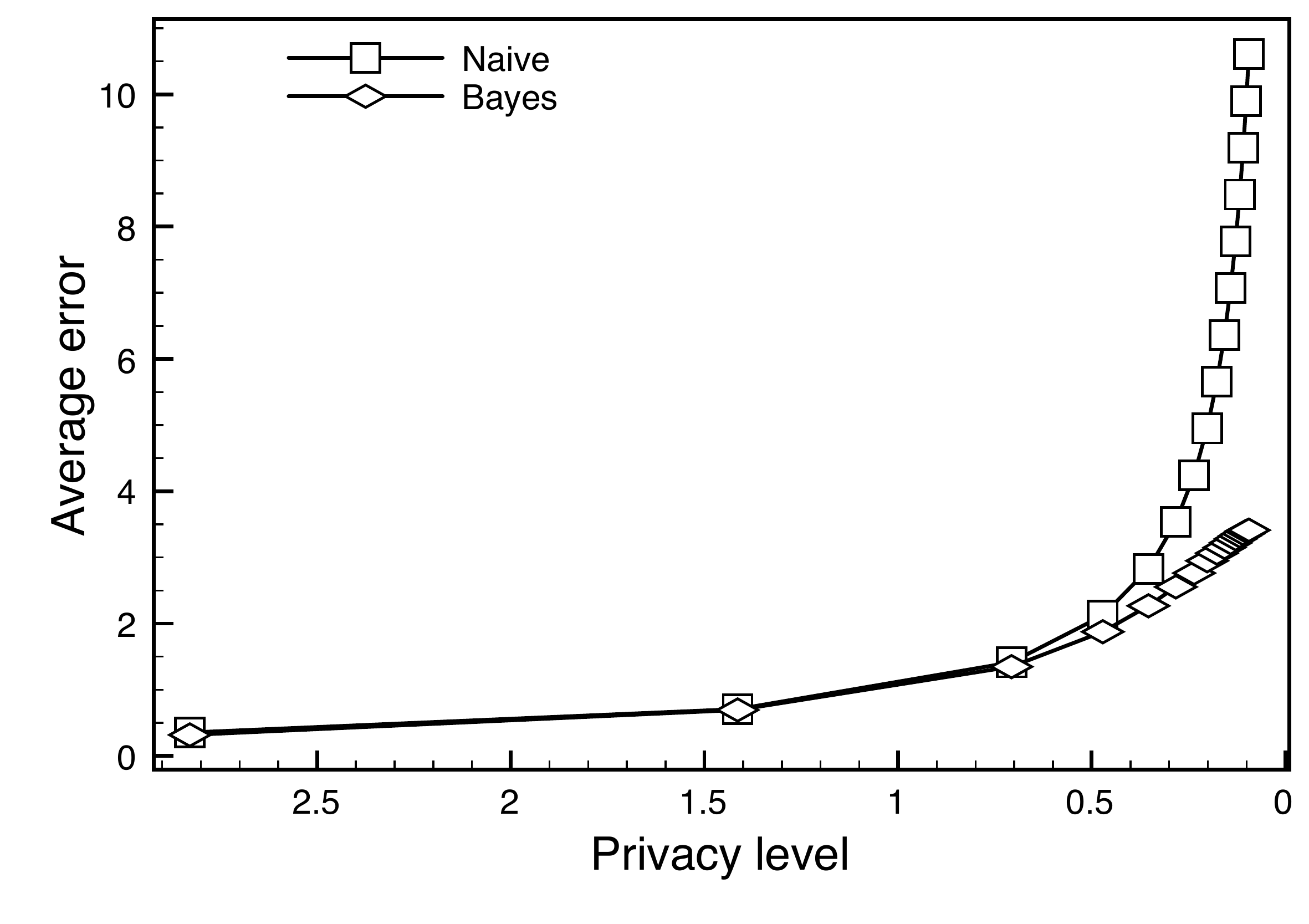}
    \caption{Impact of privacy level on the average error ($n=100$)}
    \label{fig:Err-3}
\end{center}
\end{figure}
\begin{figure}[htbp]
\begin{center}
  \includegraphics[width=.65\columnwidth]{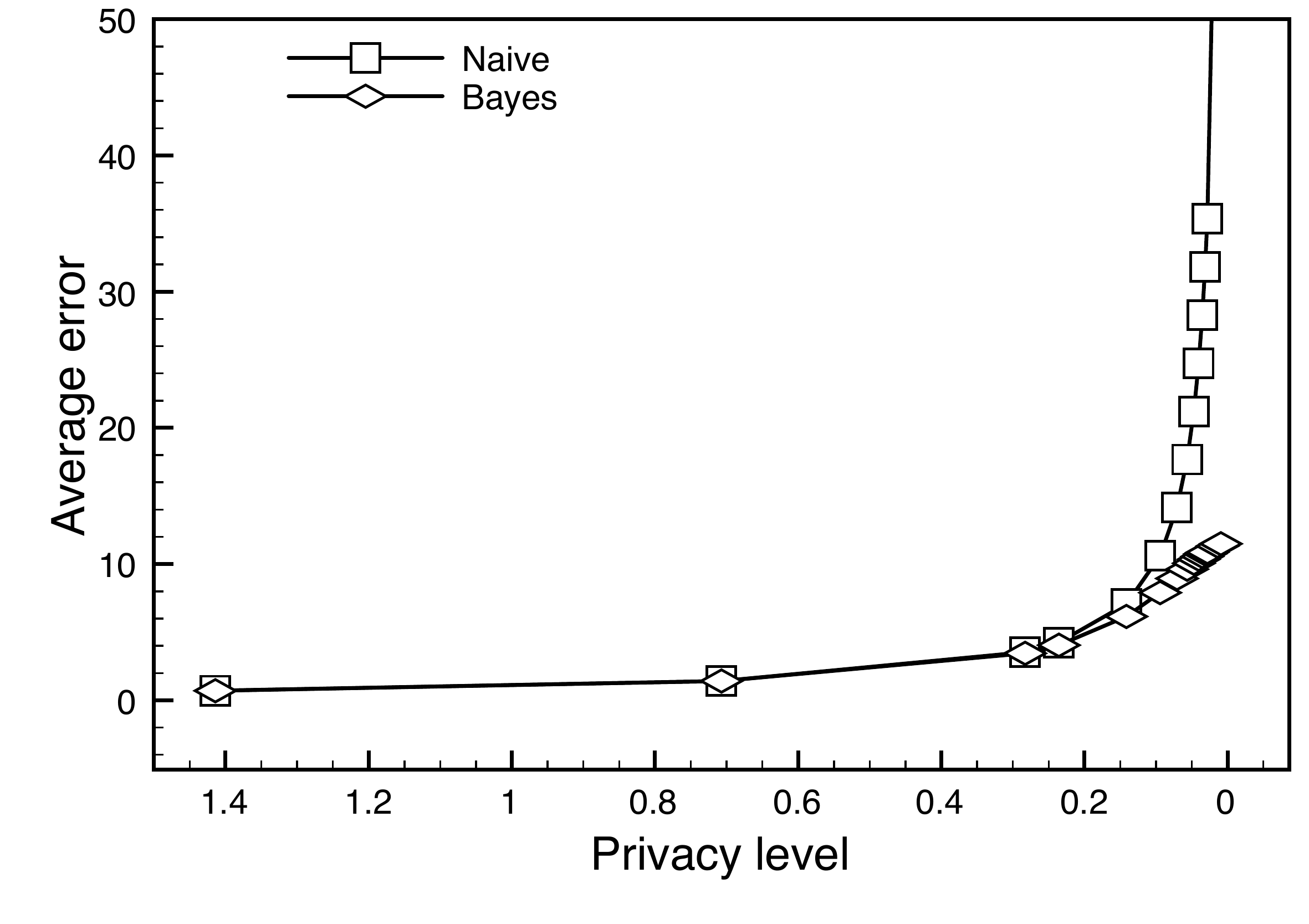}
    \caption{Impact of privacy level on the average error ($n=1000$)}
    \label{fig:Err-4}
\end{center}
\end{figure}

After this first round of results, we can already see that, under the same level of differential privacy, the Bayes estimator achieves a better trade-off between privacy and accuracy than the noisy response only.

We examine now the behaviour of the two estimators with respect to the third quantity: the probability $p$ that the property $S$ is satisfied (we call it the predicate probability, since it represents the probability that the predicate is true for a generic record). In \figurename~\ref{fig:Err-5}, we plot the average error for a differential privacy level $\epsilon = 0.1$. Since the average error of the naive estimator is equal to $1/\epsilon$ as per Equation (\ref{avenoise}), the performance curve of that estimator is a straight horizontal line (here equal to $1/0.1=10$) and depends neither on the predicate probability nor on the database size. Instead, for the Bayes estimator, the result is expected to depend on the variance of the binomially-distributed variable that represents the true response to the query. Since that variance is $np(1-p)$, i.e., a quadratic function of the predicate probability that attains its maximum at $p=0.5$, we observe the same behaviour for the error curves in \figurename~\ref{fig:Err-5}: the average error is minimum when the probability $p$ is very close to either 0 or 1. 
\begin{figure}[htbp]
\begin{center}
  \includegraphics[width=.65\columnwidth]{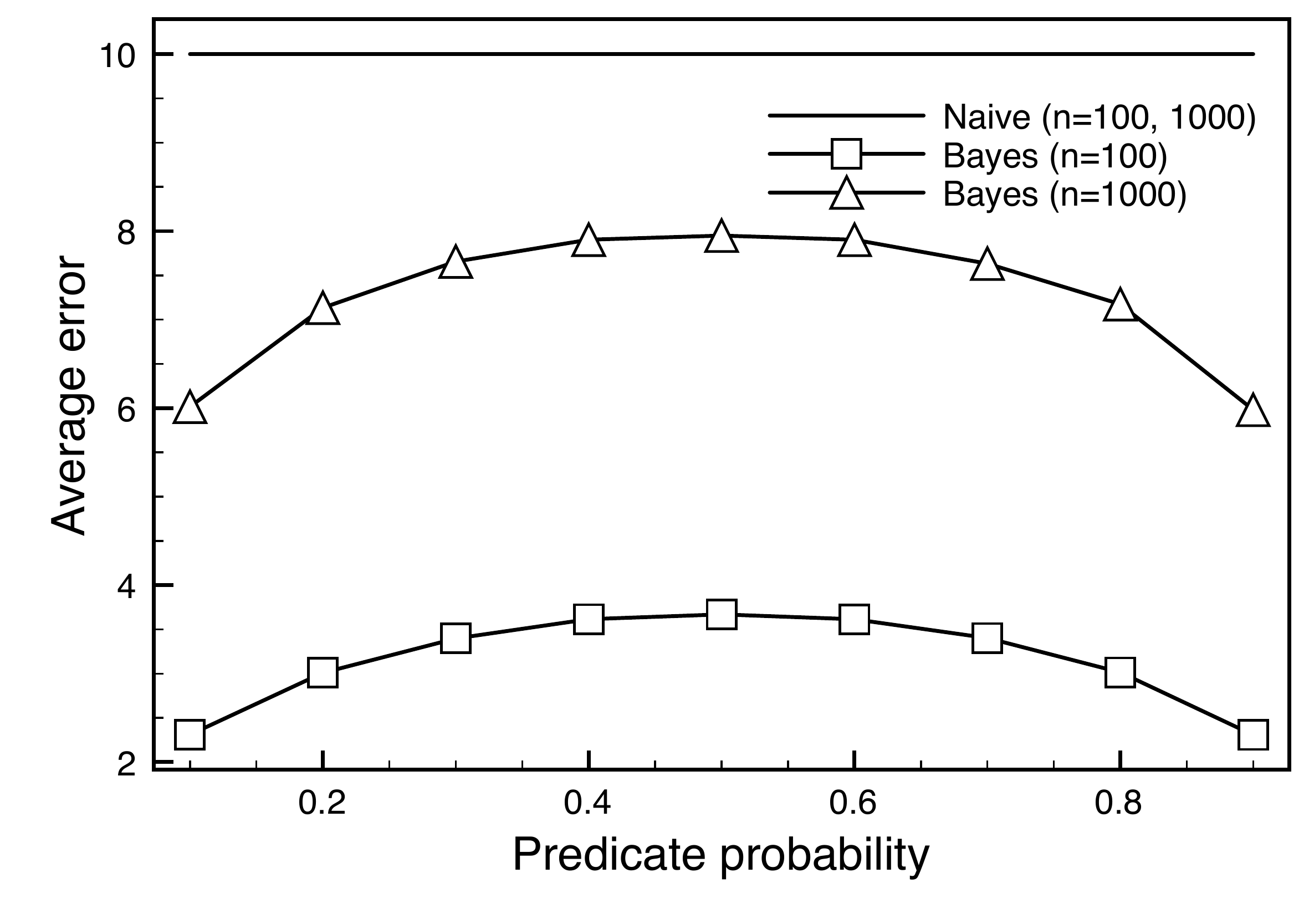}
    \caption{Impact of predicate probability on error}
    \label{fig:Err-5}
\end{center}
\end{figure}

Now we turn to the second performance metric: the probability that the Bayes estimator is better than the naive estimator, as defined in Section \ref{bayesdef}. We recall that that probability has to be larger than 50\% for the Bayes estimator to be actually better than its naive competitor.

We report the results in \figurename~\ref{fig:Perr-1} and \figurename~\ref{fig:Perr-2} respectively versus the noise standard deviation and the level $\epsilon$ of differential privacy. These results have been obtained through MonteCarlo simulation with 100,000 simulation runs. The Bayes estimator is always better than the naive one. The gap between the two increases with the amount of noise injected (or, equivalently, with the level of privacy required). The two estimators have close performances just when $\epsilon$ is so high (low amount of noise injected) that the error is anyway very low under both estimators: when the level of differential privacy is $\epsilon > 1$, the average error for the two estimators follows the inequality $\overline{M}_{\textrm{Bayes}} < \overline{M}_{\textrm{naive}} < 1$, even below the granularity of the counting process.

\begin{figure}[htbp]
\begin{center}
  \includegraphics[width=.65\columnwidth]{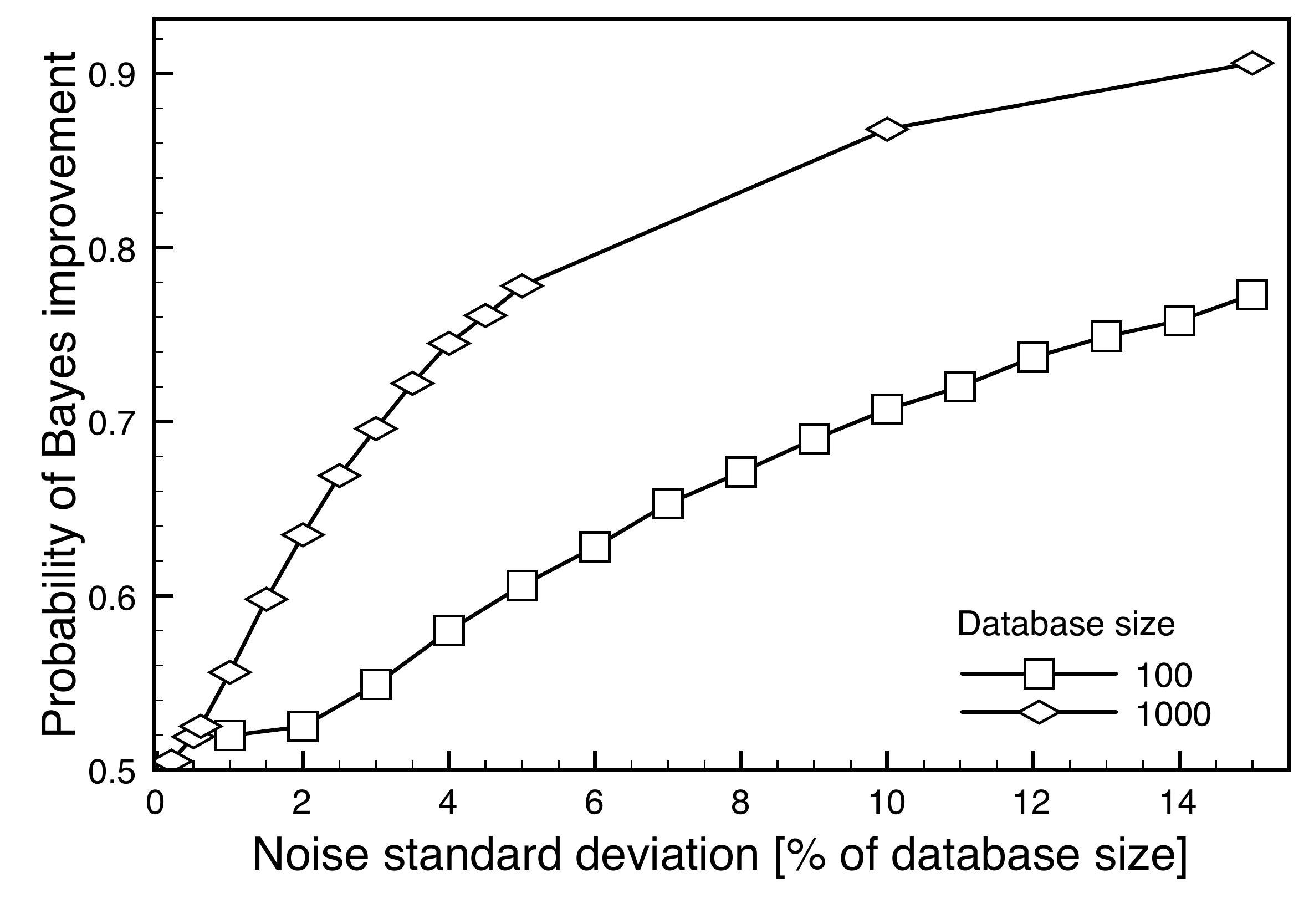}
    \caption{Probability of Bayes improvement vs noise}
    \label{fig:Perr-1}
\end{center}
\end{figure}
\begin{figure}[htbp]
\begin{center}
  \includegraphics[width=.65\columnwidth]{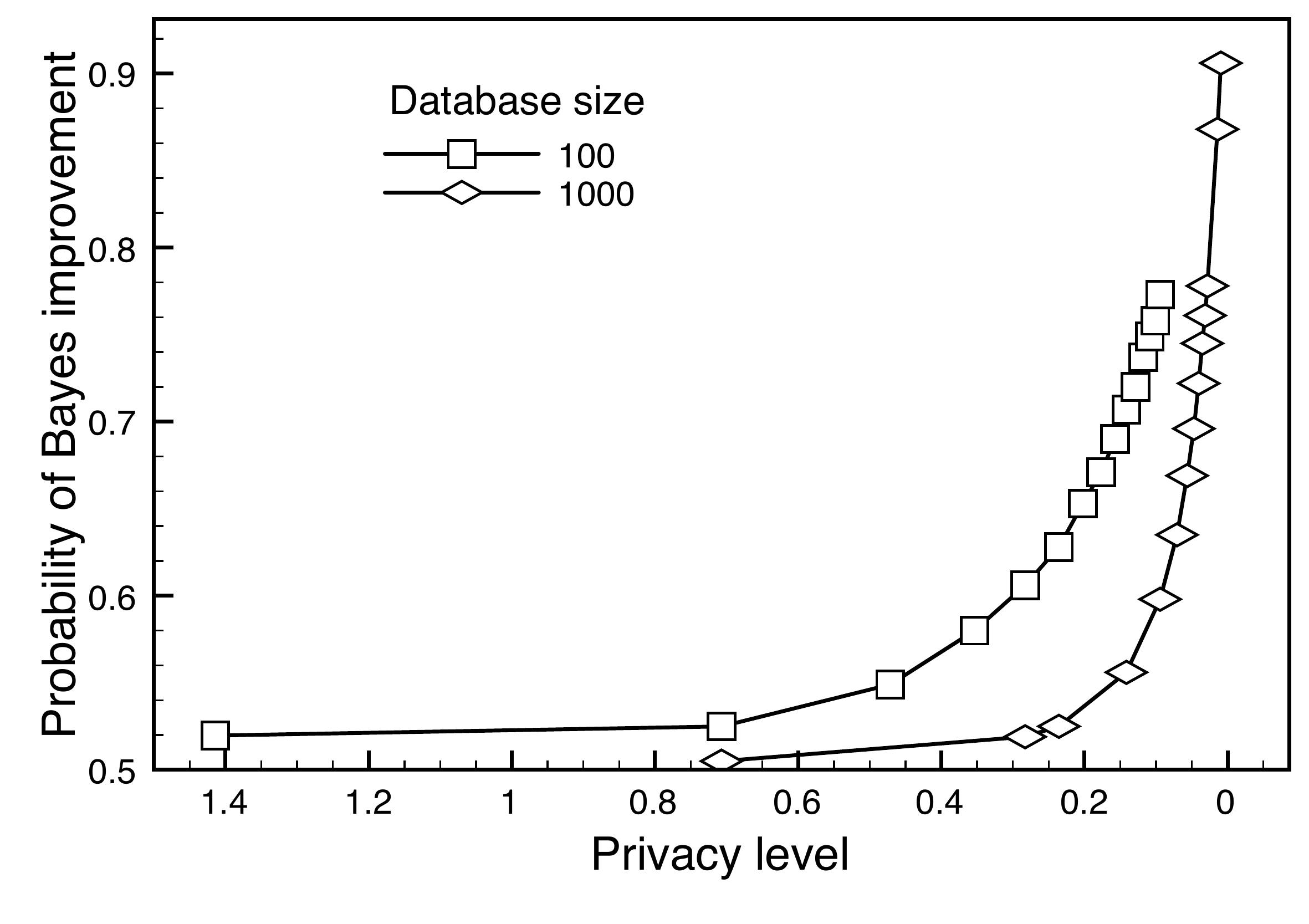}
    \caption{Probability of Bayes improvement vs privacy level}
    \label{fig:Perr-2}
\end{center}
\end{figure}

As to the impact of the predicate probability, we see a similar behaviour as observed in \figurename~\ref{fig:Err-5}. We report the results in \figurename~\ref{fig:Perr-3}, where the performance of the Bayes estimator is the worst at midrange, where the variance of the binomially-distributed true outcome is the largest.

\begin{figure}[htbp]
\begin{center}
  \includegraphics[width=.65\columnwidth]{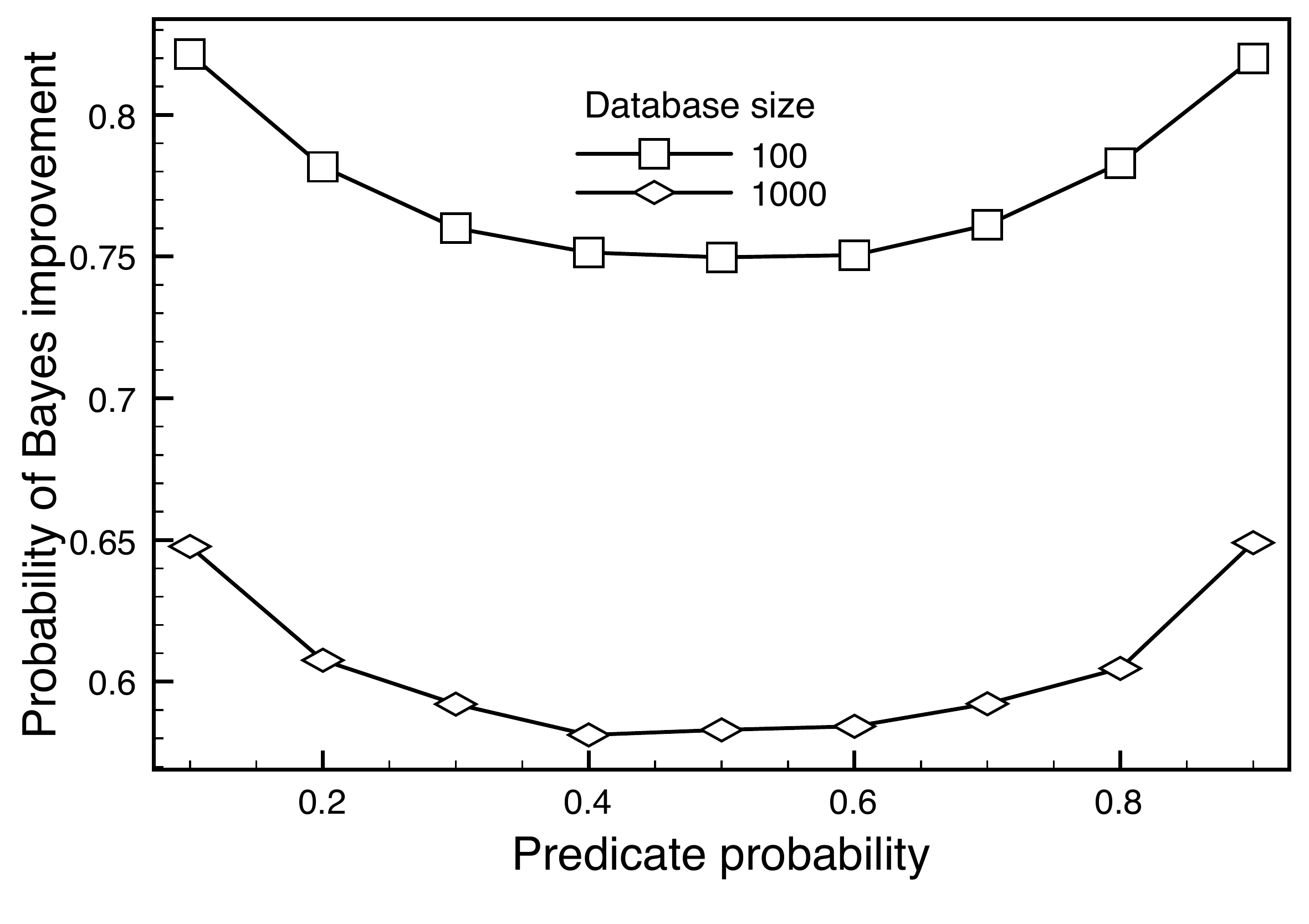}
    \caption{Probability of Bayes improvement vs predicate probability}
    \label{fig:Perr-3}
\end{center}
\end{figure}

\section{Conclusions}
Adding Laplacian noise to a response (an implementation of the output perturbation approach) has been shown to achieve $\epsilon$-differential privacy in a statistical database. This approach establishes a trade-off between accuracy and privacy. Under this approach, the noisy response delivered by the database owner can be seen as a (\textit{naive}) estimate of the true response. However, the user who has submitted the query may try to improve that estimate to achieve a better accuracy.

We have shown that, by using a single noisy response, the user may adopt a Bayesian approach and obtain a refined estimate of the true response to its query. The additional information required by the Bayesian approach is the database size and the probability that the predicate incorporated in the query is satisfied. We have investigated the characteristics of the resulting estimate comparing it to the naive approach through two performance metrics: the average error and the probability of getting a lower error.

We have found that the Bayesian approach provides a better estimate in all cases. The improvement gets larger as the differential privacy requirement gets tighter, with the average error growing linearly for the naive estimator and levelling off for the Bayes one.

The introduction of the Bayes estimate allows therefore to alter the accuracy-privacy trade-off resulting from noise addition, providing a better accuracy for the same level of noise addition (hence, the same level of differential privacy). 


\end{document}